\documentclass[journal]{IEEEtran}
\usepackage{cite}
\usepackage{hyperref}  % For making links clickable

\ifCLASSINFOpdf
  \usepackage[pdftex]{graphicx}
\else
\fi
\usepackage{amsmath}
\usepackage{amssymb}   % For \mathbb
\usepackage{amsthm}    % For \theoremstyle

\usepackage{xcolor}  % For colors

\usepackage{array}

\def\ds{\displaystyle}
\DeclareMathOperator{\esssup}{ess\,sup}
\newtheorem{algorithm}{Algorithm}
\newtheorem{assumption}{Assumption}

\newtheorem{theorem}{Theorem}
\theoremstyle{definition}\newtheorem{definition}{Definition}
\theoremstyle{remark}\newtheorem{remark}{Remark}
\theoremstyle{remark}

\onecolumn

\begin{document}
%
% paper title
% Titles are generally capitalized except for words such as a, an, and, as,
% at, but, by, for, in, nor, of, on, or, the, to and up, which are usually
% not capitalized unless they are the first or last word of the title.
% Linebreaks \\ can be used within to get better formatting as desired.
% Do not put math or special symbols in the title.
\title{Robust Quickest Change Detection with Cost-Efficient Experiment Design and Unknown Post-Change Model}
%
%
% author names and IEEE memberships
% note positions of commas and nonbreaking spaces ( ~ ) LaTeX will not break
% a structure at a ~ so this keeps an author's name from being broken across
% two lines.
% use \thanks{} to gain access to the first footnote area
% a separate \thanks must be used for each paragraph as LaTeX2e's \thanks
% was not built to handle multiple paragraphs
%

\author{Patrick Vincent N. Lubenia
        and Taposh Banerjee% <-this % stops a space
\thanks{This work was supported in part by the U.S. National Science Foundation under Grant 2427909. (\textit{Corresponding author: Taposh Banerjee.})}% <-this % stops a space
\thanks{The authors are with the University of Pittsburgh, Pittsburgh, PA 15260 USA (email: pnl8$@$pitt.edu, taposh.banerjee$@$pitt.edu).}% <-this % stops a space
% \thanks{J. Doe and J. Doe are with Anonymous University.}% <-this % stops a space
% \thanks{Manuscript received April 19, 2005; revised August 26, 2015.}
}

\maketitle

% As a general rule, do not put math, special symbols or citations
% in the abstract or keywords.
\begin{abstract}
A quickest change detection problem is studied where the decision-maker has access to multiple experiments, but can choose only one experiment at a time to detect the change. The true post-change distribution under each experiment is unknown, but known to belong to a family of distributions. 
It is assumed that each of these families of distribution have an element that is least favorable in a well-defined sense. 
An algorithm is proposed that is designed using the least favorable distributions. It is shown that the algorithm is first-order asymptotically optimal, as the rate of false alarms goes to zero. 
This algorithm minimizes the worst-case average detection delay over all post-change distributions within the assumed families, subject to a constraint on the average run length to false alarm and constraints on the average fraction of time each experiment is performed before change. 
\end{abstract}

% Note that keywords are not normally used for peerreview papers.
\begin{IEEEkeywords}
Cost constraint, CUSUM, data efficiency, experiment design, quickest change detection, robust.
\end{IEEEkeywords}

% For peer review papers, you can put extra information on the cover
% page as needed:
% \ifCLASSOPTIONpeerreview
% \begin{center} \bfseries EDICS Category: 3-BBND \end{center}
% \fi
%
% For peerreview papers, this IEEEtran command inserts a page break and
% creates the second title. It will be ignored for other modes.
\IEEEpeerreviewmaketitle

% ------------ %
% Introduction %
% ------------ %

\section{Introduction}
% The very first letter is a 2 line initial drop letter followed
% by the rest of the first word in caps.
% 
% form to use if the first word consists of a single letter:
% \IEEEPARstart{A}{demo} file is ....
% 
% form to use if you need the single drop letter followed by
% normal text (unknown if ever used by the IEEE):
% \IEEEPARstart{A}{}demo file is ....
% 
% Some journals put the first two words in caps:
% \IEEEPARstart{T}{his demo} file is ....
% 
% Here we have the typical use of a "T" for an initial drop letter
% and "HIS" in caps to complete the first word.
\IEEEPARstart{I}{n} the classical quickest change detection (QCD) problem, a decision-maker observes a stochastic process whose distribution changes at a fixed unknown point, called the change-point. 
The observations are collected by performing a single experiment repeatedly over time. The goal is to detect this change with the minimum possible delay, subject to a constraint on the rate of false alarms. The QCD problem has applications in statistical process control, sensor network surveillance, computer network security, network analysis, finance, and public health, among others  \cite{poor2009, tartakovsky2014, tartakovsky2019, veeravalli2013}.

% observe the process with independent and identically distributed observations. Furthermore, the pre-change and post-change distributions are known to the observer. 

When the change point is modeled as a random variable, the QCD problem is said to be in the Bayesian setting. Some results on Bayesian QCD can be found in \cite{veeravalli2013, tartakovsky2014, tartakovsky2019, banerjee2021, guo2023, naha2024}.
When the change-point is modeled as a
deterministic constant, the QCD problem is said to be in non-Bayesian settings. In this domain, a minimax formulation is often preferred. 
The classical QCD problem has been studied in two minimax formulations in \cite{lorden1971} and \cite{pollak1985}. An optimal solution for both formulations is the cumulative sum (CUSUM) algorithm developed in \cite{page1954}. It was shown in \cite{lorden1971} that this algorithm is asymptotically optimal as the false alarm rate goes to zero. Subsequently, the CUSUM algorithm was proven to be exactly optimal for any false alarm rate, as detailed in \cite{moustakides1986}. A comprehensive review of non-Bayesian QCD can be found in \cite{veeravalli2013, tartakovsky2014, tartakovsky2019, liang2023, brucks2023}.

In practical applications, performing an experiment for data collection typically has an associated cost, e.g., energy consumption, labor costs, and lab tests. When change occurs rarely, these costs may become economically burdensome. This issue has motivated the study of sampling control in the statistical process control literature \cite{reynolds2004, tagaras1998survey}. In the context of QCD, the issue of the cost of observations has been investigated 
in multiple studies, such as in \cite{banerjee2012, banerjee2013, banerjee2015, banerjee2015-2, banerjee2015-3}. The optimal algorithms in \cite{banerjee2012, banerjee2013, banerjee2015, banerjee2015-2, banerjee2015-3} minimize detection delay while controlling the false alarm rate and the average number of observations used or the fraction of time observations are used prior to the change-point.
These works on on-off observation control have been partially extended to a QCD problem with two experiments in \cite{ren2017}. 

In recent work, we developed a theory of QCD with cost-efficient experiment design involving multiple experiments \cite{lubenia2026}. In the problem in \cite{lubenia2026}, the decision-maker has access to multiple experiments, but it can choose only one experiment at a time. It is assumed that the experiments could be ordered according to their quality, as measured by their Kullback-Leibler (KL) divergence, a measure of how close two distributions are to each other. The best experiment is known to the decision-maker, but it cannot perform it all the time as it has to satisfy a constraint on the cost of experiment design. Specifically, the goal in  \cite{lubenia2026} is to find a policy that minimizes the worst-case average detection delay subject to constraints on the average run length to a false alarm and the pre-change observation ratio, the fraction of time an experiment is performed before change. The formulation allows data efficiency, where the decision-maker has the option not to perform any experiment. To solve the problem, we proposed the multiple-experiment CUSUM ($m$E-CUSUM) algorithm, a generalization of the CUSUM algorithm to multiple experiments. This algorithm is asymptotically optimal, as the rate of false alarms goes to zero, and it can achieve the performance of the best experiment. Using techniques of scaling of undershoots, soft truncation of tests, and randomization, we showed that the cost constraints can be met exactly. This is unlike other efforts in the literature where the constraints can often only be met with 
strict inequality \cite{ren2017}. We have also introduced an iterative technique to design the algorithm: the algorithm for the two-experiment case (when there are only two experiments to choose from) is proposed, and then the algorithm for the three-experiment case is defined in terms of the algorithm for the two-experiment case, and so on. 
The area of sequential experiment design is vast, and we refer to Section I-A of \cite{lubenia2026} for a detailed literature survey that provides further justification for our work.

In this paper, we extend the multiple-experiment method developed in \cite{lubenia2026} to the case where the post-change distribution is unknown. In the literature on QCD with unknown post-change distributions, three main approaches are typically used to find optimal algorithms. First, the generalized likelihood ratio tests replace the unknown post-change parameter with its maximum likelihood estimate \cite{tartakovsky2014, tartakovsky2019, lorden1971, lai1998, lau2019}. Second, mixture-based tests assume a prior for the post-change parameter, and the likelihood ratio is integrated over this prior \cite{tartakovsky2014, tartakovsky2019, lai1998, pollak1987}. Third, robust tests consider a least favorable law (LFL) for the post-change distribution \cite{hou2024, hou2025, unnikrishnan2011, oleyaeimotlagh2023}. Among the three approaches, the robust approach often leads to computationally efficient algorithms whose statistics can be computed recursively. Motivated by the latter, we take a robust approach in this paper. 

In our problem setup, we assume that the post-change family of laws under each experiment has a known law that is least favorable in a well-defined sense. Our solution approach is to first order the experiments according to the information quality of their least favorable laws (LFLs). Thus, the experiment for which the KL-divergence between its post-change LFL and its pre-change law is the highest is considered to be the best experiment. We then use the mE-CUSUM algorithm from \cite{lubenia2026} to perform QCD and cost-efficient experiment design using the LFLs. We call this LFL-based algorithm the multi-Experiment Robust CUSUM (mE-RCUSUM) algorithm. We show that the mE-RCUSUM algorithm is asymptotically optimal with respect to a minimax robust formulation that we propose in this paper. The problem of robust QCD with on-off observation control has been studied in \cite{hou2025}. In this paper, we extend that study to multiple experiments.

We note that the order of the quality of the experiments based on their true KL-divergences (which is unknown, since the true post-change parameters are unknown) may be different from the order based on the quality of the LFLs. We elaborate on this observation using an example. Consider an intruder detection system which has an audio sensor and a video sensor. The manifestation of an intrusion in each sensor (volume of footsteps or visibility of an intruder) is unknown in advance, and the observer cannot assume a fixed post-change signal for each sensor. Instead, the decision-maker can plan for the worst-case scenario: the least favorable or hardest-to-detect intrusion signature each sensor could produce. Due to deteriorating quality, the video sensor's worst-case signal is almost indistinguishable from ordinary background clutter, while the audio sensor's worst-case signal remains comparatively more separable from ambient noise. This worst-case separability defines quality in our framework: the decision-maker considers the audio sensor to be of higher quality than the video sensor. Due to economic considerations, an observer cannot use both sensors all the time. At each observation time, the decision-maker can choose to 1) observe the audio sensor, 2) observe the video sensor, 3) observe both at the same time, or 4) not observe any sensor at all. %Our study generalizes this to $m$ experiments, each with an unknown post-change distribution and corresponding least favorable law.

The rest of the paper are organized as follows. Section~\ref{sec:prob} provides details of the problem formulation for the robust multiple-experiment QCD problem with cost constraints. We then present the $m$E-RCUSUM algorithm in Section~\ref{sec:me-rcusum}. To appreciate the algorithm, Section~\ref{sec:example} provides an example illustrating the algorithm. Theoretical guarantees of the $m$E-RCUSUM algorithm are discussed in Section~\ref{sec:optimality} before summarizing the study in Section~\ref{sec:conclusion}.

\section{Problem Formulation}
\label{sec:prob}

In this section, we formulate the problem of quickest change detection (QCD) with cost-constrained experiment design where the post-change distribution is unknown under each experiment, but known to belong to a family of distributions.

Consider a stochastic process whose statistical properties change at time $\nu$. At each time point, a decision-maker observes the process by performing one of $m$ experiments. Let $X_n^i$ be the observation at time $n$ using experiment $i$. The sequence of random variables $\{ X_n^i \}$ has a probability density function $f_i$ for $n < \nu$ and $g_i$ for $n \geq \nu$. The densities $\{f_i\}_{i=1}^m$ are known but the densities $\{g_i\}_{i=1}^m$ are unknown. However, we assume that, for each $i$, $g_i$ belongs to a known family of distributions $\mathcal{G}_i$. We further assume that each family has a least favorable law (LFL) $\bar{g}_i \in \mathcal{G}_i$. Definition~\ref{def:stoch-bdd} below defines the notion of stochastic boundedness, which will allow us to formally define the notion of LFL in Definition~\ref{def:stoch-bdd-law}.

\begin{definition}
\label{def:stoch-bdd}

Consider random variables $Z_1$ and $Z_2$ with density $f_1$ and $f_2$, respectively. We say that $f_2$ (or $Z_2$) is stochastically bounded below by $f_1$ (or $Z_1)$, denoted $f_1 \prec f_2$ (or $Z_1 \prec Z_2$), if
$$ \mathsf{P}(Z_1 \geq t) \leq \mathsf{P}(Z_2 \geq t), \ \forall t \in \mathbb{R}. $$
\end{definition}

\begin{definition}
\label{def:stoch-bdd-law}

For each $i$, we say that the family of post-change laws $\mathcal{G}_i$ has a least favorable law (LFL) $\bar{G}_i$ with density $\bar{g}_i$ if
$$ \mathcal{L}(\phi (X), \bar{G}_i) \prec \mathcal{L}(\phi (X), G_i), \ \forall G_i \in \mathcal{G}_i, $$
where $\mathcal{L}(\phi (X), G_i)$ denotes the law or distribution of $\phi (X)$ when $X$ has distribution $G_i$.
\end{definition}

We assume that the sequence of experiments is ordered in the following manner: 
%The sequence $\{ X_n^j \}$ has a higher information quality compared to $\{ X_n^i \}$ for $i < j$, as measured by their Kullback-Leibler (KL) divergence:
\begin{equation}
0 < D(\bar{g}_1 \ \Vert \ f_1) \leq D(\bar{g}_2 \ \Vert \ f_2) \leq \dots \leq D(\bar{g}_m \ \Vert \ f_m) < \infty, \label{eq:kl}
\end{equation}
i.e., the LFL of experiment $m$ has the highest information quality, the LFL of experiment $m - 1$ the next-highest, and so on, and the LFL of experiment 1 has the lowest information quality. This ordering implies that the worst-case signal of experiment $i$ is more informative than the worst-case signal of experiment $i - 1$. We note that the actual quality of the experiment is unknown since the post-change distribution under each experiment is unknown. Thus, the true order of KL-divergences of the $m$ experiments may be different from the order specified in \eqref{eq:kl}. For example, it is possible that the true KL-divergence for experiment $m$ is the smallest among all experiments. But the uncertainty set for the post-change law for this experiment is much smaller as compared to that of others, making the KL-divergence of its LFL the largest. For robust QCD with experiment design, it is the ordering specified in \eqref{eq:kl} that matters for optimality. 

At each time $n \geq 0$, the decision-maker decides whether or not a change has occurred based on the information available at time $n$. If the available information suggests no change has occurred, the decision-maker decides which experiment to perform at time $n + 1$. To formalize this, let the decision variable $S_n^i$ and decision vector $S_n$ be defined as follows:
\begin{align*}
S_n^i & = \left\{
\begin{array}{cl}
1 & \text{if experiment $i$ is performed}, \\
0 & \text{otherwise},
\end{array}
\right. \\
S_n & = (S_n^1, \dots, S_n^m).
\end{align*}
Moreover, since at most one experiment can be performed at each time $n$, then
$$ \sum_{i = 1}^m S_n^i \leq 1. $$
When this sum is equal to 0, then no experiment is performed.

$S_n$ is a function $\phi_{n - 1}$ of the information $I_{n - 1}$ available up to time $n - 1$:
$$ S_n = \phi_{n - 1} (I_{n - 1}) \ \text{ for } n = 1, 2, \dots, $$
where $\phi_{n - 1}$ denotes the control law at time $n - 1$. Here, we define the information available at time $n$ using the vectors
\begin{align*}
I_n^i & = \left[ S_1^i, \dots, S_n^i, (X_1^i)^{(S_1^i)}, \dots, (X_n^i)^{(S_n^i)} \right] \\
I_n & = (I_n^1, \dots, I_n^m)
\end{align*}
where $I_0^i = \varnothing$ for $i = 1, \dots, m$. For $k = 1, \dots, n$, $(X_k^i)^{(S_k^i)}$ represents $X_k^i$ if $S_k^i = 1$. Otherwise, $X_k^i$ is not in $I_n^i$.

At any time $n$, using the information $I_n$ available at that time, the decision-maker decides whether a change has occurred using a stopping time $\tau$. Therefore, a policy for the multiple-experiment QCD is $\Psi = \{ \tau, \phi_0, \dots, \phi_{\tau - 1} \}$.

To determine an optimal policy, we define metrics we will use for the delay, false alarm, and observation costs. We define first the set $\mathcal{G}$ of all possible environments, where an environment is a specification of every experiment's post-change law:
$$ \mathcal{G} = \mathcal{G}_1 \times \mathcal{G}_2 \times \dots \times \mathcal{G}_m. $$
An environment $G \in \mathcal{G}$ is a specification of each experiment's post-change law simultaneously, even though the decision-maker only observes data from one experiment at a time.

For the detection delay, we consider Lorden's formulation for the worst-case average detection delay (WADD) for an environment $G \in \mathcal{G}$ \cite{lorden1971}:
$$ \mathsf{WADD}_G (\Psi) = \sup_{\nu \geq 1} \esssup \mathsf{E}_\nu^G [(\tau - \nu + 1)^+ \mid I_{\nu - 1}]. $$
Here, $\mathsf{E}_n^G$ is the corresponding expectation of the probability measure $\mathsf{P}_n^G$ which depends on which experiment is used at time $n$. Following the notations used in \cite{hou2025}, when no change occurs, i.e., $\nu = \infty$, there is no post-change law to speak of, and we use $\mathsf{P}_\infty = \mathsf{P}_\infty^G$ and $\mathsf{E}_\infty = \mathsf{E}_\infty^G$ to denote the corresponding probability measure and expectation, respectively.

For the false alarm metric, we consider the average run length to false alarm (ARLFA) which is used in \cite{lorden1971} and \cite{pollak1985}. We constrain it to be a nonnegative number $\gamma$:
$$ \text{ARLFA} (\Psi) = \mathsf{E}_\infty [\tau] \geq \gamma. $$
Finally, for the metric for the observation costs, we note that each experiment has a different cost. Thus, we seek control policies that meet a constraint on the fraction of time each experiment is performed before the change-point. We use the pre-change observation ratio (POR) introduced in \cite{lubenia2026} where $\text{POR}_i$ represents the POR of experiment $i$:
\begin{align*}
\mathsf{POR}_i (\Psi)
& = \limsup_{n \to \infty} \ \mathsf{E}_n \left[ \frac{1}{n} \sum_{k = 1}^{n - 1} S_k^i\right] \\
& = \limsup_{n \to \infty} \ \mathsf{E}_\infty \left[ \frac{1}{n} \sum_{k = 1}^{n - 1} S_k^i \right].
\end{align*}
From the above definition, it is clear that $0 \leq \text{POR}_i (\Psi) \leq 1$ for all $i$. If experiment $i$ is performed all the time in a policy, then $\text{POR}_i (\Psi) = 1$.
In addition, we note that since at most one experiment is performed at any given time, 
$$ 0 \leq \sum_{i = 1}^m \text{POR}_i (\Psi) \leq 1. $$
When the sum in the last inequality is strictly less than $1$, then there is a positive fraction of time when no experiment is performed. Note that since the ARLFA and POR values do not depend on the post-change enviroment $G$. 

Following the formulation in \cite{hou2025}, our robust optimization problem for the multiple-experiment QCD problem where the post-change law is unknown is given by
\begin{align}
\label{eq:prob-formulation}
\begin{split}
\inf_{\Psi} \ \sup_{G \in \mathcal{G}} \quad & \mathsf{WADD}_G (\Psi) \\
\text{s.t.} \quad & \mathsf{ARLFA} (\Psi) \geq \gamma \\
& \mathsf{POR}_i (\Psi) \leq \beta_i \text{ for } i = 1, 2, \dots, m,
\end{split}
\end{align}
where $\gamma > 0$, $0 < \beta_i < 1$, for $i = 1, 2, \dots, m$, are given constraints, and $\ds \sum_{i = 1}^m \beta_i < 1$. The gap $\ds \beta_0 = 1 - \sum_{i = 1}^m \beta_i$ is the constraint on the fraction of time no experiment is performed.

\section{Robust QCD Algorithms with Cost-Constrained Experiment Design}
\label{sec:me-rcusum}

In this section, we propose the \texorpdfstring{$m$}{}E-RCUSUM algorithm that solves the multiple-experiment problem defined in \eqref{eq:prob-formulation}. 
%We combine the LFL-based log-likelihood ratio used in \cite{hou2025} and the recursive structure of switching to multiple experiments utilized in \cite{lubenia2026}.
Before describing our algorithm, we make the following assumptions for the rest of the paper:
\begin{assumption}
Each post-change family of distributions $\mathcal{G}_i$ has an LFL $\bar{g}_i$ for $i = 1, 2, \dots, m$.
\end{assumption}
\begin{assumption}
    All likelihood ratios appearing in this paper are continuous and have finite positive moments up to second order.
\end{assumption}

% \begin{itemize}
% \item Assumption~1: Each post-change family of distributions $\mathcal{G}_i$ has an LFL $\bar{g}_i$ for $i = 1, 2, \dots, m$.

% \item Assumption~2: All likelihood ratios appearing in this paper are continuous and have finite positive moments up to second order.
% \end{itemize}

Our algorithm has the same structure as the algorithm in \cite{lubenia2026}, with the difference that here, we use the likelihood ratios calculated using the LFLs. As in \cite{lubenia2026}, we define the algorithm iteratively beginning with the case of two experiments ($m = 2$):
\begin{itemize}
    \item We first define the 2E-RCUSUM algorithm when we only have two experiments to choose from. See Algorithm \ref{alg:2e-rcusum} in Section~\ref{sec:2ERCUSUM}.
    \item Next, we define the Truncated 2E-RCUSUM algorithm, which is an important intermediate step in the design with multiple experiments. See Algorithm~\ref{alg:trunc-2e-rcusum} in Section~\ref{sec:Trunc_2ECUSUM}.
    \item Finally, we define the $m$E-RCUSUM algorithm for $m$ experiments in an iterative manner using the Truncated 2E-RCUSUM algorithm. See Algorithm~\ref{alg:me-rcusum} in Section~\ref{sec:mECUSUMAlgo}. 
\end{itemize}
The case where we only have one experiment ($m=1$) and must choose on-off observation control to save the cost of performing that experiment has been studied in \cite{hou2025}. Thus, in what follows, we only discuss the cases where 
$$
m \geq 2.
$$
In Section~\ref{sec:example}, we provide an example using Gaussian data that clarifies the structure of these algorithms. 

\subsection{The 2E-RCUSUM Algorithm for the case \texorpdfstring{$m=2$}{}}
\label{sec:2ERCUSUM}
In the case of two experiments ($m=2$), we denote the observations from the higher-quality experiment by $\{Y_n\}$ and those from the lower-quality experiment by $\{X_n\}$. The densities for experiment $Y$ are denoted by $f_Y$ and $g_Y$, and the densities for experiment $X$ are denoted by $f_X$ and $g_X$. The LFL for $Y$ is denoted by $\bar g_Y$, and the LFL for $X$ is $\bar g_X$.  

Using these notations, we propose the 
2E-RCUSUM Algorithm below. 

\begin{algorithm}[2E-RCUSUM: $\Psi_{2\text{RC}} (A, a_X, a_Y, N_0, N_X, \mu) \allowbreak = \Psi_{2\text{RC}}$]
\label{alg:2e-rcusum}

Fix $A > 0$, $a_X > 0$, $a_Y > 0$, $N_0 > 0$, $N_X > 0$, and $\mu > 0$. The \textup{2E-RCUSUM} statistic $\{ D_n \}$ is defined as follows:

\begin{enumerate}
\item We start with $D_0 = 0$ and update $\{ D_n \}$ using the observations $\{ Y_n \}$ from the higher-quality experiment using
$$ D_{n + 1}  = D_n + \bar{\ell}_Y (Y_{n + 1}), $$
where $\bar{\ell}_Y (y) = \log \frac{\bar{g}_Y (y)}{f_Y (y)}$, until 
$$ \sigma = \sigma^Y (A) = \min \{ n \geq 1: D_n \notin [0, A) \}. $$
If $D_{\sigma} \geq A$, we stop and declare the change.

\item If $D_{\sigma} < 0$, we start using observations $\{ X_n \}$ from the lower-quality experiment.

    \begin{enumerate}
    \item We first scale the undershoot $U_Y = D_{\sigma}$ by a factor $a_Y$.
    
    \item We update the statistic $\{ D_n \}$ starting at time $\sigma$ using
    $$ D_{n + 1} = D_n + \bar{\ell}_X (X_{n + 1}), $$
    where $\bar{\ell}_X (X_{n + 1}) = \log \frac{\bar{g}_X (x)}{f_X (x)}$, until either the statistic falls below $a_Y U_Y$ at time
    $$ \sigma^X (a_Y U_Y) = \min \{ n \geq \sigma + 1: D_n \not \in [a_Y U_Y, 0) \} $$
    or $N_X$ number of observations of experiment $X$ are used. If $D_{\sigma^X (a_Y U_Y)} > 0$, we reset the statistic $D_n$ to $0$ and go to Step $1$.
    
    \item If $D_{\sigma^X (a_Y U_Y)} < a_Y U_Y$, we do not perform any experiment but still scale the undershoot $U_X~=~D_{\sigma^X (a_Y U_Y)}$ by a factor of $a_X$ and use $a_X U_X$ as the starting level for deterministically updating $D_n$ as
    $$ D_{n + 1} = D_n + \mu. $$
    This continues until either $D_n$ goes above $a_Y U_Y$ or $N_0$ number of deterministic updates are used. At this time, we reset $D_n$ to $a_Y U_Y$ and continue updating $D_n$ as in Step $\textup{c}$.
    \end{enumerate}

\item Stop at the stopping time
\begin{align*}
\tau_{\scriptscriptstyle \textup{2RC}}
& = \tau_{\scriptscriptstyle \textup{2RC}} (A, a_X, a_Y, N_0, N_X, \mu) \\
& = \inf \{ n \geq 1: D_n \geq A \}.
\end{align*}
\end{enumerate}    
\end{algorithm}

\medskip
\subsection{The Truncated 2E-RCUSUM Algorithm}
\label{sec:Trunc_2ECUSUM}
We wish to define the algorithm for $m$ experiments using the algorithm for $m - 1$ experiments. To define the algorithm in an iterative manner, we use a notion of a truncated test. In the 2E-RCUSUM algorithm, there was no limit in the number of consecutive observations of experiment $Y$, but there was for experiment $X$. We define the truncated 2E-RCUSUM algorithm as the 2E-RCUSUM algorithm wherein we limit the number of consecutive observations of experiment $Y$. While performing experiment $Y$, if the statistic crosses the detection threshold or $N_Y$ observations are reached, we stop and declare a change. While the details are intuitively clear from the definition in Algorithm~\ref{alg:2e-rcusum}, we provide the full description of the truncated algorithm here for completeness. 

\begin{algorithm}[Truncated 2E-RCUSUM: $\Psi_{\text{T2RC}} (A, a_X, a_Y, \allowbreak N_0, N_X, N_Y, \mu) = \Psi_{\text{T2RC}}$]
\label{alg:trunc-2e-rcusum}

Fix $A > 0$, $a_X > 0$, $a_Y > 0$, $N_0 > 0$, $N_X > 0$, $N_Y > 0$, and $\mu > 0$. The truncated \textup{2E-RCUSUM} statistic $\{ D_n \}$ is defined as follows:

\begin{enumerate}
\item We start with $D_0 = 0$ and update $\{ D_n \}$ using the observations $\{ Y_n \}$ from the higher-quality experiment using
$$ D_{n + 1}  = D_n + \bar{\ell}_Y (Y_{n + 1}), $$
where $\bar{\ell}_Y (y) = \log \frac{\bar{g}_Y (y)}{f_Y (y)}$, until either time
$$ \sigma = \sigma^Y (A) = \min \{ n \geq 1: D_n \notin [0, A) \} $$
or $N_Y$ number of observations of experiment $Y$ are used. If $D_{\sigma} \geq A$ or $N_Y$ has been reached, we stop and declare the change.

\item If $D_{\sigma} < 0$, we start using observations $\{ X_n \}$ from the lower-quality experiment.

    \begin{enumerate}
    \item We first scale the undershoot $U_Y = D_{\sigma}$ by a factor $a_Y$.
    
    \item We update the statistic $\{ D_n \}$ starting at time $\sigma$ using
    $$ D_{n + 1} = D_n + \bar{\ell}_X (X_{n + 1}), $$
    where $\bar{\ell}_X (X_{n + 1}) = \log \frac{\bar{g}_X (x)}{f_X (x)}$, until either the statistic falls below $a_Y U_Y$ at time
    $$ \sigma^X (a_Y U_Y) = \min \{ n \geq \sigma + 1: D_n \notin [a_Y U_Y, 0) \} $$
    or $N_X$ number of observations of experiment $X$ are used. If $D_{\sigma^X (a_Y U_Y)} > 0$, we reset the statistic $D_n$ to $0$ and go to Step $1$.
    
    \item If $D_{\sigma^X (a_Y U_Y)} < a_Y U_Y$, we do not perform any experiment, but still scale the undershoot $U_X~=~D_{\sigma^X (a_Y U_Y)}$ by a factor of $a_X$ and use $a_X U_X$ as the starting level for deterministically updating $D_n$ as
    $$ D_{n + 1} = D_n + \mu. $$
    This continues until either $D_n$ goes above $a_Y U_Y$ or $N_0$ number of deterministic updates are used. At this time, we reset $D_n$ to $a_Y U_Y$ and continue updating $D_n$ as in Step $\textup{c}$.
    \end{enumerate}

\item Stop at the stopping time
\begin{align*}
\tau_{\scriptscriptstyle \textup{T2RC}}
& = \tau_{\scriptscriptstyle \textup{T2RC}} (A, a_X, a_Y, N_0, N_X, N_Y, \mu) \\
& = \inf \{ n \geq 1: D_n \geq A, \; \text{or $N_Y$ observations used for $Y$} \}.
\end{align*}
\end{enumerate}    
\end{algorithm}

\subsection{The \texorpdfstring{$m$}{}E-RCUSUM Algorithm}
\label{sec:mECUSUMAlgo}
Using the 2E-RCUSUM and the truncated 2E-RCUSUM algorithms, we can define the 3E-RCUSUM algorithm (with experiment $Z$ having the highest quality and experiment $X$ the lowest) as follows. We start by using experiment $Z$ while the statistic is above 0. Once it falls below 0, we scale the undershoot and use this scaled value as the starting point of the truncated 2E-RCUSUM algorithm for experiments $Y$ and $X$. Once the statistic crosses 0 from below or $N_Y$ observations of experiment $Y$ are used, we use experiment $Z$ again.

Similarly, we can define the 4E-RCUSUM algorithm as follows: the highest-quality experiment is performed while the statistic is above 0. When the statistic falls below 0, the scaled undershoot is used as the starting point of the truncated 3E-RCUSUM algorithm. Once the statistic goes above 0 or the maximum number of observations of the second-highest-quality experiment is used, the highest-quality experiment is used again.

Following this recursive description, we propose the $m$E-RCUSUM algorithm (Algorithm \ref{alg:me-rcusum} below). This algorithm starts by using the highest-quality experiment (experiment $m$) while the statistic is above 0. Once it falls below 0, we scale the undershoot and use this value as the starting point of the truncated $(m - 1)$E-RCUSUM algorithm for the remaining $m - 1$ experiments. Once the statistic crosses 0 from below or $N_{m - 1}$ observations of experiment $m - 1$ are used, experiment $m$ is performed again. 

For the description below, we revert back to the observation notation that we used in the problem formulation. Thus, for experiment $i$, observations are $\{X_n^i\}$, and the pre- and post-change densities are $f_i$ and $g_i$, respectively. Finally, the LFL for experiment $i$ is denoted by $\bar g_i$. 

\begin{algorithm}[$m$E-RCUSUM: $\Psi_{m\text{RC}} (A, a_1, \dots, a_m, \allowbreak N_0, \dots, N_{m - 1}, \mu) = \Psi_{m\text{RC}}$]
\label{alg:me-rcusum}

Fix $A > 0$, $a_i > 0$ ($i = 1, 2, \dots, m$), $N_j > 0$ ($j = 0, 1, \dots, m - 1$), and $\mu > 0$. The \textup{$m$E-RCUSUM} statistic $\{ D_n \}$ is defined as follows:

\begin{enumerate}
\item We start with $D_0 = 0$ and update $\{ D_n \}$ using the observations $\{ X_n^m \}$ from experiment $m$, the experiment with highest LFL quality, using
$$ D_{n + 1}  = D_n + \bar{\ell}_m (X_{n + 1}^m), $$
where $\bar{\ell}_m (x^m) = \log \frac{\bar{g}_m (x^m)}{f_m (x^m)}$, until 
$$ \sigma = \min \{ n \geq 1: D_n \notin [0, A) \}. $$
If $D_{\sigma} \geq A$, we stop and declare change.

\item If $D_{\sigma} < 0$, we start using observations $\{ X_n^{m - 1} \}$, $\{ X_n^{m - 2} \}$, \dots, $\{ X_n^1 \}$ from the lower-quality experiments and execute the truncated \textup{$(m - 1)$E-RCUSUM} algorithm for experiments $m - 1$, $m - 2$, up to $1$ as follows:

    \begin{enumerate}
    \item We first scale the undershoot $U_m = D_{\sigma}$ by a factor $a_m$, and use $a_m U_m$ as the new zero level.

    \item With $a_m U_m$ as the new zero level, the stopping threshold is set at $0$.

    \item We update the statistic starting at time $\sigma$ using the truncated \textup{$(m - 1)$E-RCUSUM} algorithm $\Psi_{\textup{T}(m - 1)\textup{RC}} (a_m U_m, 0, a_1, \dots, a_{m - 1}, N_0, \dots, \allowbreak N_{m - 1}, \mu)$.

    \item When $\Psi_{\textup{T}(m - 1)\textup{RC}} (a_m U_m, 0, a_1, \dots, a_{m - 1}, N_0,  \allowbreak \dots, N_{m - 1}, \mu)$ terminates, either because $D_n$ crosses $0$ from below or because $N_{m - 1}$ number of observations of experiment $m - 1$ have been used, we reset the statistic $D_n$ to $0$ and go to Step $1$.
    \end{enumerate}

\item Stop at the stopping time
\begin{align*}
\tau_{\scriptscriptstyle m\textup{RC}}
& = \tau_{\scriptscriptstyle m\textup{RC}} (A, a_1, \dots, a_m, N_0, \dots, N_{m - 1}, \mu) \\
& = \inf \{ n \geq 1: D_n \geq A \}.
\end{align*}
\end{enumerate}
\end{algorithm}

\begin{remark}
When the LFL of all experiments is replaced by their actual post-change distribution, the $m$E-RCUSUM algorithm specializes to the multiple-experiment CUSUM algorithm developed in \cite{lubenia2026}.
\end{remark}

% \begin{remark}
% When there is only one experiment, the $m$E-RCUSUM algorithm is similar to the robust data-efficient CUSUM (RDE-CUSUM) algorithm developed in \cite{hou2025}. The mechanism of $\Psi_{\text{T}1\text{RC} (A, 1, \infty, \infty, \mu)}$ is similar to that of the RDE-CUSUM algorithm, except that the undershoot of the latter is truncated at $-h$.
% \end{remark}

% ------- %
% Example %
% ------- %

\subsection{Example and Evolution of the \texorpdfstring{$m$}{}E-CUSUM Algorithm}
\label{sec:example}

To appreciate the structure of the $m$E-RCUSUM algorithm, we provide an illustration for the special case of three experiments. 
\begin{equation}
    \begin{split}
        \textup{Experiment 1}:&  f_1= \mathcal{N}(0, 1), \quad g_1=\mathcal{N}(\mu_1, 1), \quad \mu_1 \textup{ is unknown}. \\
        \textup{Experiment 2}: & f_2= \mathcal{N}(0, 1), \quad g_2=\mathcal{N}(\mu_2, 1), \quad \mu_2 \textup{ is unknown}. \\
         \textup{Experiment 3}: & f_3= \mathcal{N}(0, 1), \quad g_3=\mathcal{N}(\mu_3, 1), \quad \mu_3 \textup{ is unknown}. \\
    \end{split}
\end{equation}

Suppose the true values of the means be 
\begin{equation}
    \mu_1=2.5, \quad \mu_2 = 2.25, \quad \textup{ and } \quad \mu_3=2.
\end{equation}
In this case, we have 
\begin{equation}
\label{eq:org_KLOrder}
    D(g_1 \, \| \, f_1 ) > D(g_2 \, \| \, f_2 ) > D(g_3 \, \| \, f_3 ). 
\end{equation}
If these mean values were known, we could use the multiple experiment algorithm from \cite{lubenia2026}, with Experiment 1 as the best experiment and Experiment 2 as the next best. 

%Suppose the pre-change distribution of experiments 1, 2, and 3 is standard normal, i.e., $f_1 = f_2 = f_3 = \mathcal{N}(0, 1)$ with unknown post-change means $\mu_1$, $\mu_2$, and $\mu_3$. 
%Let the point estimates of the post-change means be $\hat{\mu}_1 = 2.5$, $\hat{\mu}_2 = 2.25$, and $\hat{\mu}_3 = 2$. In this case, experiment 1 has the highest quality while experiment 3 has the lowest quality.

Since the mean parameters $\{\mu_i\}$ are actually unknown, we can estimate them with confidence intervals. 
Suppose the following confidence intervals are associated with each point estimate: 
\begin{equation}
    \hat{\mu}_1 \in [0.5, 4.5], \quad
    \hat{\mu}_2 \in [0.75, 3.75], \quad \hat{\mu}_3 \in [1, 3].
\end{equation}
If we choose to design a robust test for this setup, we would use the LFLs, which in this case are
\begin{equation}
    \bar g_1 = \mathcal{N}(0.5, 1), \quad
    \bar g_2 = \mathcal{N}(0.75, 1), \quad
    \bar g_3 = \mathcal{N}(1, 1).
\end{equation}
This gives the order
\begin{equation}
\label{eq:new_KLOrder}
    D(\bar g_1 \, \| \, f_1 ) < D(\bar g_2 \, \| \, f_2 ) < D(\bar g_3 \, \| \, f_3 ),
\end{equation}
which is the opposite of the order given in \eqref{eq:org_KLOrder} based on the true values of the parameters. 
When employing the $m$E-RCUSUM algorithm, we would use Experiment 3 as our best experiment and Experiment 2 as the next best. 

\begin{figure}[!t]
\centering
\includegraphics[width = 0.65\linewidth]{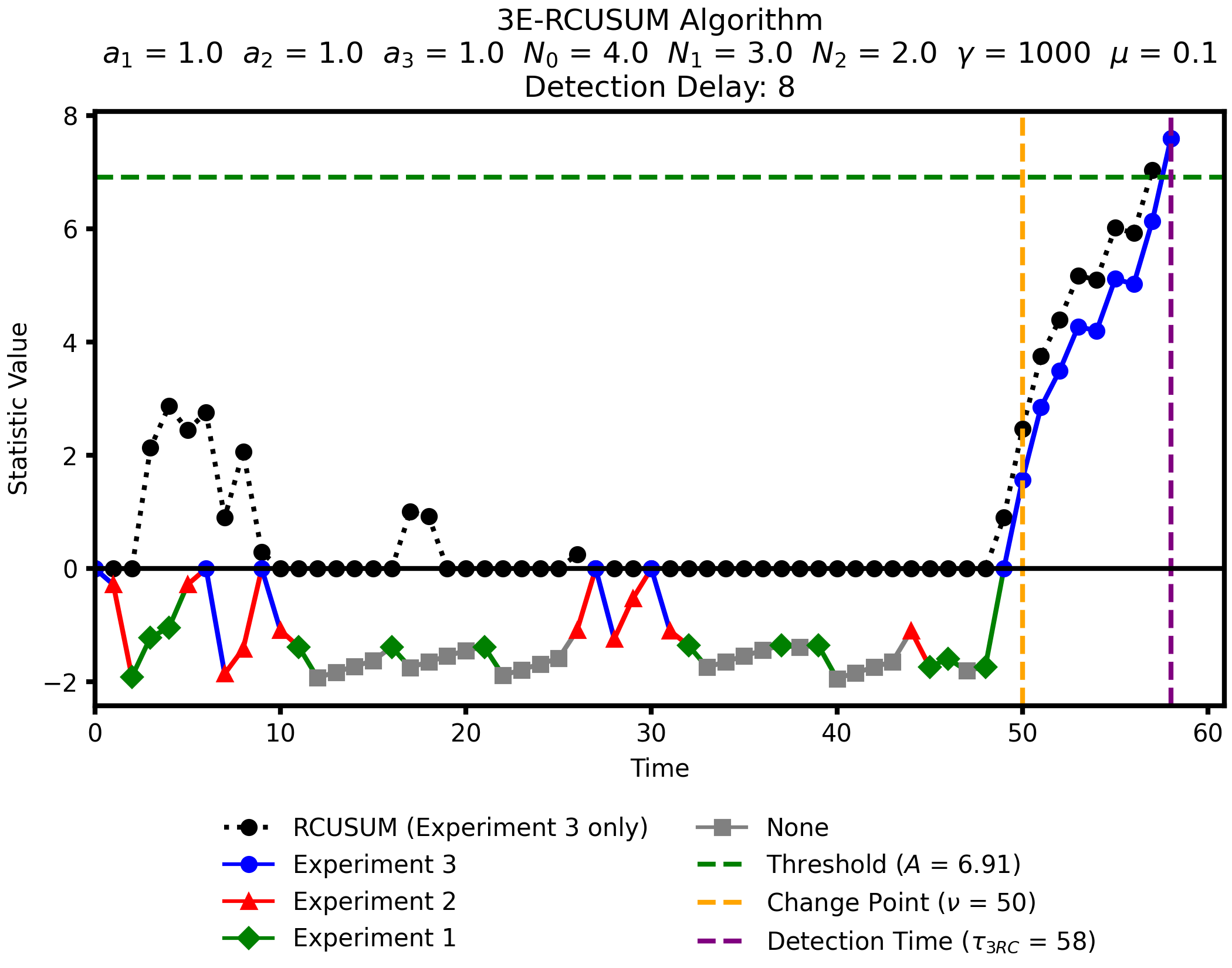}
\caption{An evolution of the statistic $D_n$ of the 3E-RCUSUM algorithm for $f_1 = f_2 = f_3 = \mathcal{N}(0, 1)$, $\bar{g}_1 = \mathcal{N}(0.5, 1)$, $\bar{g}_2 = \mathcal{N}(0.75, 1)$, $\bar{g}_3 = \mathcal{N}(1, 1)$, $\nu = 50$, and $\gamma = 1\,000$, with $a_1 = a_2 = a_3 = 1.0$, $N_2 = 2.0$, $N_1 = 3.0$, $N_0 = 4.0$, and $\mu = 0.1$. The parameter values correspond to POR values of $\text{POR}_3 = 0.15$, $\text{POR}_2 = 0.15$, $\text{POR}_1 = 0.25$, and $\text{POR}_0 = 0.45$. The decision-maker performs experiment 3 whenever $D_n \geq 0$. When $D_n < 0$, the observer performs experiments 2 and 1 within their thresholds, or does not perform any experiment after experiment 1. Also shown is the RCUSUM algorithm for experiment 3. The performance of the 3E-RCUSUM algorithm is comparable to that of the best experiment.}
\label{fig:3e-rcusum}
\end{figure}

Fig.~\ref{fig:3e-rcusum} illustrates an evolution of the 3E-RCUSUM algorithm for the above example. 
In our algorithm, we set the parameters as follows:
\begin{align*}
\textup{Scaling parameters: }& \quad a_1 = a_2 = a_3 = 1.0, \\
\textup{Truncation parameters: }& \quad  N_2 = 2.0, \; N_1 = 3.0, \;  N_0 = 4.0, \ \text{and} \\
\textup{Data-efficiency parameter: } &\quad  \mu = 0.1.
\end{align*}
These choices of parameters give the following POR values:
$$ \text{POR}_3 = 0.15, \;\text{POR}_2 = 0.15, \;\text{POR}_1 = 0.25, \ \;\text{and} \ \;\text{POR}_0 = 0.45. $$
Thus, before the change-point, experiments 2 and 3 are each used only 15\% of the time, experiment 1 only 25\% of the time, and no experiment is used 45\% of the time. With $\gamma = 1\,000$, the threshold was set to $A = \log \gamma$, a choice justified in Theorem~\ref{thm:ALRFA}.

Fig.~\ref{fig:3e-rcusum} also presents the evolution of the RCUSUM algorithm for the experiment 3. The RCUSUM algorithm was first studied in \cite{unnikrishnan2011} for the case of single experiment with unknown post-change model. 
Since the RCUSUM is designed without any concern or constraint on the cost of sampling or experiments, the experiment it uses, the experiment 3, is observed $49$ times before the change-point of $50$, compared to only six observations for the 3E-RCUSUM algorithm. Despite this, the detection delay of the two algorithms are comparable. The behavior shown in Fig.~\ref{fig:3e-rcusum} is typical of the 3E-RCUSUM algorithm. We formalize this observation in the theorems in the next section, where it is shown that the 3E-RCUSUM algorithm is asymptotically optimal. Specifically, he 3E-RCUSUM algorithm achieves the performance of the RCUSUM algorithm executed using the best experiment 3, as the rate of false alarms goes to zero.

% ---------------------------------------------- %
% Robust Optimality of the $m$E-RCUSUM Algorithm %
% ---------------------------------------------- %

\section{Robust Optimality of the \texorpdfstring{$m$}{}E-RCUSUM Algorithm}
\label{sec:optimality}

In this section, we prove the robust optimality of the $m$E-RCUSUM algorithm. We first establish a universal lower bound on any policy $\Psi$ which solves the robust multiple-experiment QCD problem described in \eqref{eq:prob-formulation}.

\begin{theorem}
\label{thm:lower-bound}
Suppose Assumption~1 and Assumption~2 are satisfied and 
$$ 0 < D(\bar{g}_1 \ \Vert \ f_1) \leq \dots \leq D(\bar{g}_m \ \Vert \ f_m) < \infty. $$
Let $\Psi$ be a policy satisfying $\mathsf{ARLFA}(\Psi) \geq \gamma$ and $\mathsf{POR}_i (\Psi) \leq \beta_i$ for $i = 1, 2, \dots, m$, for fixed $\gamma > 0$ and $0 < \beta_i < 1$ where $\ds \sum_{i = 1}^m \beta_i < 1$. Then as $\gamma \to \infty$,
$$ \sup_{G \in \mathcal{G}} \ \mathsf{WADD}_G (\Psi) \geq \frac{\log \gamma}{D(\bar{g}_m \ \Vert \ f_m)} (1 + o(1)). $$
\end{theorem}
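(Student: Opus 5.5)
The plan is to reduce the robust problem to a non-robust one by fixing one environment, and then invoke the lower bound for QCD with multiple experiments from \cite{lubenia2026}. Since $\bar{G} = (\bar{G}_1, \dots, \bar{G}_m) \in \mathcal{G}$, we have for any policy $\Psi$
$$ \sup_{G \in \mathcal{G}} \mathsf{WADD}_G (\Psi) \geq \mathsf{WADD}_{\bar{G}} (\Psi). $$
The constraints $\mathsf{ARLFA}(\Psi) \geq \gamma$ and $\mathsf{POR}_i(\Psi) \leq \beta_i$ are defined under $\mathsf{P}_\infty$ only. Hence they do not depend on the environment, and $\Psi$ remains feasible for the non-robust problem in which the post-change density under experiment $i$ is the known density $\bar{g}_i$. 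It therefore suffices to show that, in that non-robust problem, every feasible policy satisfies $\mathsf{WADD}_{\bar{G}}(\Psi) \geq \frac{\log \gamma}{D(\bar g_m \Vert f_m)}(1+o(1))$.

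For this I will not rely only on citing \cite{lubenia2026}; I will sketch the standard Lai-type argument adapted to experiment selection. The first step is to control the information accumulated after the change point $\nu$. Up to time $\nu + n$, the log-likelihood ratio of $\mathsf{P}_\nu^{\bar G}$ against $\mathsf{P}_\infty$, given $I_{\nu-1}$, equals
$$ \Lambda_n = \sum_{k=\nu}^{\nu+n} \sum_{i=1}^m S_k^i\, \bar{\ell}_i(X_k^i). $$
This holds because the control $S_k$ is $I_{k-1}$-measurable, so the control law cancels from the likelihood ratio. The compensator of $\Lambda_n$ under $\mathsf{P}_\nu^{\bar G}$ is $\sum_k \sum_i S_k^i D(\bar g_i \Vert f_i)$, which is at most $(n+1) D(\bar g_m \Vert f_m)$ by the ordering \eqref{eq:kl} and $\sum_i S_k^i \leq 1$. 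The centred part is a martingale with bounded conditional second moments by Assumption~2. Doob's maximal inequality then gives, for every $\delta>0$,
$$ \mathsf{P}_\nu^{\bar G}\Big(\max_{n \leq L} \Lambda_n \geq (1+\delta) L\, D(\bar g_m\Vert f_m) \,\Big|\, I_{\nu-1}\Big) \to 0 \quad \text{as } L \to \infty, $$
uniformly in $\nu$ and in the policy.

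The second step is the change-of-measure argument, following Lai (1998) and \cite{lorden1971}. Set $L_\gamma = (1-\epsilon)\log\gamma / D(\bar g_m \Vert f_m)$. From $\mathsf{E}_\infty[\tau] \geq \gamma$, one finds a $\nu$ and a set of histories of positive $\mathsf{P}_\infty$-probability on which $\tau \geq \nu$ and $\mathsf{P}_\infty(\tau < \nu + L_\gamma \mid I_{\nu-1})$ is at most of order $L_\gamma/\gamma$. Splitting the event $\{\nu \leq \tau < \nu+L_\gamma\}$ according to whether $\Lambda$ is below or above $(1-\epsilon/2)\log\gamma$, we obtain
$$ \mathsf{P}_\nu^{\bar G}(\tau < \nu + L_\gamma \mid I_{\nu-1}) \leq e^{(1-\epsilon/2)\log\gamma}\, \frac{L_\gamma}{\gamma} + o(1) \to 0. $$
Therefore $\esssup \mathsf{E}^{\bar G}_\nu[(\tau-\nu+1)^+ \mid I_{\nu-1}] \geq L_\gamma(1-o(1))$, and letting $\epsilon \to 0$ yields the claim.

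The main obstacle is the first step. The maximal inequality must be uniform over adaptively chosen experiments and over all conditioning histories $I_{\nu-1}$. I will handle this by writing $\Lambda_n$ as a martingale plus a compensator that is deterministically dominated by $(n+1)D(\bar g_m\Vert f_m)$, and by using the second-moment bound from Assumption~2 to apply Doob's inequality directly. This avoids needing any i.i.d. structure across time. The POR constraints play no role in this lower bound, which is consistent with the statement being independent of the $\beta_i$.
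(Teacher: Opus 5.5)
Your proof is correct. The first half matches the paper: fix $\bar G=(\bar g_1,\dots,\bar g_m)\in\mathcal{G}$, observe that ARLFA and POR are computed under $\mathsf{P}_\infty$ so $\Psi$ remains feasible, and bound $\sup_{G}\mathsf{WADD}_G(\Psi)$ below by $\mathsf{WADD}_{\bar G}(\Psi)$.

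The second half takes a different route. The paper simply cites Corollary~1 of \cite{lubenia2026}, which gives $\inf_\Psi\mathsf{WADD}_{\bar G}(\Psi)\sim\log\gamma/D(\bar g_m\Vert f_m)$ over policies meeting both the ARLFA and POR constraints. You instead re-derive the converse with Lai's change-of-measure argument. You decompose $\Lambda_n$ into a martingale plus a compensator bounded by $(n+1)D(\bar g_m\Vert f_m)$, using $\sum_i S_k^i\le 1$ and the ordering \eqref{eq:kl}, and apply Doob's $L^2$ inequality to the martingale part.

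The details check out:
\begin{itemize}
\item Since $L_\gamma D(\bar g_m\Vert f_m)=(1-\epsilon)\log\gamma$, your split level $(1-\epsilon/2)\log\gamma$ equals $(1+\delta)L_\gamma D(\bar g_m\Vert f_m)$ with $\delta=\frac{\epsilon/2}{1-\epsilon}>0$, so the first step applies.
\item The low-$\Lambda$ part is at most of order $L_\gamma\gamma^{-\epsilon/2}\to 0$.
\item The good $\nu$ comes from Lai's standard block lemma: if $\mathsf{P}_\infty(\tau<\nu+L\mid\tau\ge\nu)>L/\gamma$ for every $\nu=kL+1$, then $\mathsf{E}_\infty[\tau]<\gamma$. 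Averaging over $\{\tau\ge\nu\}\in\sigma(I_{\nu-1})$ then yields a positive-probability set of histories with small conditional probability.
\end{itemize}

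What each route buys:
\begin{itemize}
\item The citation route is shorter, but it inherits whatever regularity conditions the cited corollary needs and leaves implicit how Assumption~2 is used.
\item Your route is self-contained and shows Assumption~2 enters only through the conditional second-moment bound.
\item Your bound holds uniformly over all adaptive experiment-selection rules.
\item You make clear that the POR constraints play no role in the converse. You in fact prove the slightly stronger statement that $\mathsf{WADD}_{\bar G}(\Psi)\ge\frac{\log\gamma}{D(\bar g_m\Vert f_m)}(1+o(1))$ for every policy with $\mathsf{ARLFA}(\Psi)\ge\gamma$.
\end{itemize}
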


\begin{IEEEproof}
By Assumption~1, an LFL $\bar{g}_i \in \mathcal{G}_i$ exists for each experiment $i = 1, \dots, m$. Consider $\bar{G} = (\bar{g}_1, \dots, \bar{g}_m) \in \mathcal{G}$. Thus, for any policy $\Psi$,
$$ \sup_{G \in \mathcal{G}} \ \mathsf{WADD}_G (\Psi) \geq \mathsf{WADD}_{\bar{G}}   (\Psi) \geq \inf_\Psi \ \mathsf{WADD}_{\bar{G}} (\Psi), $$
where the infimum is taken over all policies satisfying the ARLFA and POR constraints. 
By Corollary~1 in \cite{lubenia2026}, as $\gamma \to \infty$,
$$ \inf_\Psi \ \mathsf{WADD}_{\bar{G}} (\Psi) \sim \frac{\log \gamma}{D(\bar{g}_m \ \Vert \ f_m)}.$$
Thus, as $\gamma \to \infty$,
$$ \sup_{G \in \mathcal{G}} \ \mathsf{WADD}_G (\Psi) \geq \frac{\log \gamma}{D(\bar{g}_m \ \Vert \ f_m)} (1 + o(1)). $$
\end{IEEEproof}

\medskip
Our goal in the rest of the section is to show that the $m$E-RCUSUM algorithm $\Psi_{m\text{RC}}$ achieves the above lower bound. 
To this end, we first provide results regarding its ARLFA and POR performances, i.e., we provide guidelines on how to design the algorithm to satisfy a given set of ARLFA and POR constraints. In the theorem below, we define $\tau_{\scriptscriptstyle \text{RC}}^i (A)$ as the stopping time of the RCUSUM test using experiment $i$ alone, with stopping threshold $A$.

\begin{theorem}
\label{thm:ALRFA}
Consider the \textup{$m$E-RCUSUM} policy $\Psi_{m\textup{RC}}$ with
$$ 0 < D(\bar{g}_1 \ \Vert \ f_1) \leq \dots \leq D(\bar{g}_m \ \Vert \ f_m) < \infty. $$
For any fixed $a_i > 0$ ($i = 1, \dots, m$), $N_j > 0$ ($j = 0, \dots, \allowbreak m - 1$), and $\mu > 0$, setting $A = \log \gamma$ guarantees
$$ \mathsf{ARLFA}(\Psi_{m\textup{RC}}) \geq \gamma. $$
\end{theorem}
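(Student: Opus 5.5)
The plan is to compare $\tau_{m\textup{RC}}$ with the single-experiment robust CUSUM stopping time $\tau^m_{\textup{RC}}(A)$ run on experiment $m$. Then I would use the classical fact that a CUSUM test with threshold $A=\log\gamma$, built from a genuine log-likelihood ratio $\bar\ell_m=\log(\bar g_m/f_m)$, has $\mathsf{E}_\infty[\tau^m_{\textup{RC}}(A)]\ge e^{A}=\gamma$. The key structural observation is that under $\mathsf{P}_\infty$ the statistic $D_n$ of Algorithm~\ref{alg:me-rcusum} can cross $A$ only during a Step~1 excursion. Such an excursion always starts from $D=0$ and uses only i.i.d.\ observations $X^m_n\sim f_m$. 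Every other phase (the truncated $(m-1)$E-RCUSUM, the deterministic $\mu$-updates, the off periods) keeps $D_n<0$. Each such phase either resets $D$ to $0$ or, through truncation by $N_{m-1}$, returns control to Step~1 with $D$ reset to $0$. In the truncated sub-algorithm the declaration of a change corresponds to reaching the level $0$, not $A$, so it only triggers a return to Step~1.

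The proof would proceed as follows. First, decompose time under $\mathsf{P}_\infty$ into i.i.d.\ cycles. Each cycle consists of one Step~1 excursion, which is a sequential test of $\bar g_m$ versus $f_m$ on $[0,A)$ started at $0$, followed by an excursion below zero of random length $R\ge 0$. Second, let $p=\mathsf{P}_\infty(D_\sigma\ge A)$ be the probability that a Step~1 excursion ends above $A$. By Wald's likelihood-ratio identity (change of measure from $f_m$ to $\bar g_m$ on $\{D_\sigma\ge A\}$), $p\le e^{-A}\,\mathsf{P}_{\bar g_m}(D_\sigma\ge A)\le e^{-A}$. Third, the number of cycles until stopping is geometric with mean $1/p$. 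Each cycle contains at least one observation, and the cycles are i.i.d.\ because each starts fresh at $0$ and is driven by fresh observations and the same deterministic rules. Wald's identity then gives
\begin{equation*}
\mathsf{E}_\infty[\tau_{m\textup{RC}}] \ge \frac{1}{p}\,\mathsf{E}_\infty[\text{length of a cycle}] \ge \frac{1}{p} \ge e^{A} = \gamma .
\end{equation*}
Alternatively, one can couple directly with $\tau^m_{\textup{RC}}(A)$. Deleting the below-zero periods from the $m$E-RCUSUM path yields a path that behaves like a CUSUM reflected at $0$ and driven by $\bar\ell_m$. This gives $\tau_{m\textup{RC}}\ge\tau^m_{\textup{RC}}(A)$ in distribution, and the standard bound $\mathsf{E}_\infty[\tau^m_{\textup{RC}}(A)]\ge e^{A}$ applies.

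The main obstacle is verifying rigorously that the recursive truncated sub-algorithms never produce a stopping event at level $A$ and always return to level $0$ without carrying a positive overshoot into Step~1. The overshoot case is handled by the reset to $0$ in Step~2(d). I also need the below-zero excursion to terminate almost surely, or else $\tau=\infty$ on that event, which only helps the bound. I would prove this by induction on $m$ using the structure of Algorithm~\ref{alg:trunc-2e-rcusum}: at every level $D_n$ is confined below the current zero level $a_{k}U_{k}<0$, or below $0$, until a reset occurs. The $N_j$ truncations and the positive drift $\mu$ ensure finite excursions. The remaining steps are standard.
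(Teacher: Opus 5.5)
Your proposal is correct. Your ``alternative'' coupling argument is essentially the paper's proof. The paper observes that deleting the sojourns of $D_n$ below zero leaves a process statistically identical to the RCUSUM for experiment $m$, and then cites Lorden's bound $\mathsf{E}_\infty[\tau^m_{\scriptscriptstyle \textup{RC}}(A)] \ge e^A$. The paper phrases this comparison as ``conditioning on the event $S_n^m = 1$ for all $n$''. That event is null under $\mathsf{P}_\infty$, so your formulation is the precise version of the same idea: the Step~1 steps of $\Psi_{m\textup{RC}}$ are equal in law to the RCUSUM run, and the below-zero steps only add time.

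Your main cycle argument is a genuinely more self-contained route. Instead of reducing to RCUSUM and citing Lorden, you inline Lorden's proof directly on $\Psi_{m\textup{RC}}$:
\begin{itemize}
\item Each Step~1 excursion is an i.i.d.\ two-sided SPRT on $[0,A)$ started at $0$.
\item Wald's likelihood-ratio identity gives $p=\mathsf{P}_\infty(D_\sigma\ge A)\le e^{-A}$.
\item Since each cycle uses at least one observation, $\tau_{\scriptscriptstyle m\textup{RC}} \ge K$ pathwise, where $K$ is geometric with mean $1/p\ge e^{A}$.
\end{itemize}
This approach needs no appeal to the CUSUM result and no informal conditioning, and it shows directly where $e^{A}$ comes from. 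The cost is that you must make the renewal structure explicit: resets to exactly $0$, and fresh i.i.d.\ observations in each cycle. The paper uses that structure too, but only implicitly.

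Two small remarks. First, your Wald step is actually an equality $\mathsf{E}_\infty[\tau]=\mathsf{E}[K]\,\mathsf{E}[L]$ if you set $R=0$ on the final, successful cycle. The bound $\tau\ge K$ alone already suffices. Second, the ``main obstacle'' you flag is settled by construction. All thresholds inside the truncated sub-algorithms are $\le 0$, and Step~2(d) resets to $0$. The truncation parameters bound each below-zero sojourn by the constant $K_{m-1}$ used in the proof of Theorem~\ref{thm:upper-bound}. So termination is automatic and no induction is needed.
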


\begin{IEEEproof}
The proof is similar to the proof of Theorem 1 in \cite{lubenia2026}. We provide the complete proof here for convenience. 

Since the RCUSUM algorithm for any experiment $i$ is the classical CUSUM algorithm with pre-change density $f_i$ and post-change density $\bar{g}_i$, we have the following well-known result for the lower bound of the RCUSUM statistic based on \cite{lorden1971}:
$$ \mathsf{E}_\infty [\tau_{\scriptscriptstyle \text{RC}}^i (A)] \geq e^A. $$
Each time the statistic $D_n$ cross zero from below, it is reset to zero. At this time, the process renews itself and the process from that point on has the same distribution as the process that started at time $1$. Since the obervations are independent, this implies that in the $m$E-RCUSUM algorithm, the sojourns of the statistic $D_n$ below zero (when lower-quality experiments are used or when no experiment is performed) only delay the statistic from reaching the threshold $A$. In fact, if we remove all the sojourns of the statistic $D_n$ below zero from 
the $m$E-RCUSUM algorithm, the algorithm will be statistically identical to the RCUSUM algrorithm for experiment $m$ (the one with the highest LFL information quality). 

Thus, conditioning on the event that the highest-quality experiment $m$ is used at every observation time decreases the expected stopping time:
\begin{align*}
\mathsf{ARLFA}(\Psi_{m\text{RC}})
& = \mathsf{E}_\infty [\tau_{\scriptscriptstyle m\text{RC}}] \\
& \geq \mathsf{E}_\infty [\tau_{\scriptscriptstyle m\text{RC}} \mid S_n^m = 1 \ \forall n] \\
& = \mathsf{E}_\infty [\tau_{\scriptscriptstyle \text{RC}}^m (A)] \geq e^A.
\end{align*}
Setting $A = \log \gamma$, we get
$$ \text{ARLFA}(\Psi_{m\text{RC}}) \geq e^{\log \gamma} = \gamma. $$
\end{IEEEproof}

Theorem~\ref{thm:ALRFA} shows that the ARLFA constraint is satisfied by setting a threshold $A$ similar to the one used in the classical CUSUM algorithm. We now prove a theorem on satisfying the POR constraints.

\medskip
\begin{theorem}
\label{thm:POR}

Consider the \textup{$m$E-RCUSUM} policy $\Psi_{m\textup{RC}}$ with
$$ 0 < D(\bar{g}_1 \ \Vert \ f_1) \leq \dots \leq D(\bar{g}_m \ \Vert \ f_m) < \infty. $$
For any fixed threshold $A$, there exist values of $a_i > 0$ ($i = 1, \dots, m$), $N_j > 0$ ($j = 0, \dots, m - 1$), and $\mu > 0$, all independent of $A$, such that for each $i = 1, \dots, m$,
$$ \mathsf{POR}_i (\Psi_{m\textup{RC}}) \leq \beta_i $$
for any given constraint $\beta_i \in (0, 1)$ with $\ds \sum_{i = 1}^m \beta_i < 1$.
\end{theorem}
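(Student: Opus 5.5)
The plan is to follow the proof of the corresponding POR theorem in \cite{lubenia2026}. The key observation is that, under $\mathsf{P}_\infty$, the $m$E-RCUSUM statistic is a regenerative process. Every time $D_n$ is reset to $0$ (at the end of a sojourn below zero), the future evolves independently of the past and has the same law as from time $0$. We first treat $A=\infty$, i.e., the test never stops. This is the relevant regime, since $\mathsf{POR}_i$ is a long-run pre-change average, and the choices we make will not depend on $A$. In this regime, a regeneration cycle consists of three parts: a sojourn above zero using experiment $m$, ending at the ladder time $\sigma$; an undershoot $U_m=D_\sigma$; and a sojourn of the truncated $(m-1)$E-RCUSUM started at $a_mU_m$, which ends when the statistic returns to $0$ or $N_{m-1}$ uses of experiment $m-1$ occur. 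By the renewal-reward theorem,
\[
\mathsf{POR}_i(\Psi_{m\textup{RC}}) = \frac{\mathsf{E}_\infty[\text{number of uses of experiment } i \text{ in a cycle}]}{\mathsf{E}_\infty[\text{cycle length}]}.
\]
Both expectations are finite. Since $\mathsf{E}_\infty[\bar\ell_m]=-D(f_m\Vert\bar g_m)<0$, the time $\sigma$ has finite mean, and Assumption~2 gives a finite-mean undershoot. The sojourn below zero is bounded because of the truncations $N_j$, with the deterministic $\mu$-drift phases capped by $N_0$. With finite $A$, the test merely stops a cycle early, so the same ratio bounds hold up to an $o(1)$ correction. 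Alternatively, one can define the POR over the renewal structure with restarts after stopping, as in \cite{lubenia2026}.

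Next, we tune the parameters iteratively, from the inside out, mirroring the recursive definition of the algorithm. The numerator for experiment $m$ is $\mathsf{E}_\infty[\sigma]$, a fixed constant independent of all other parameters. The cycle length can be made arbitrarily large by increasing the time spent below zero, through larger $a_m$ (deeper starting points) and larger $N_{m-1},\dots,N_0$. Hence $\mathsf{POR}_m$ can be pushed below any $\beta_m$. Inside the below-zero sojourn, the fractions of time spent in experiments $m-1,\dots,1$ and in the idle phase are controlled in the same manner by $(a_{m-1},N_{m-1})$, then $(a_{m-2},N_{m-2})$, and so on, down to $(a_1,N_0,\mu)$. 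At the innermost level, the idle phase has length at most $N_0$ and at least about $\min(N_0,|a_1U_1 - a_2U_2|/\mu)$. Choosing $\mu$ small and $N_0$ large therefore makes the idle fraction as large as desired, which absorbs the slack $\beta_0=1-\sum_i\beta_i>0$. I would prove this by induction on $m$. The base case $m=2$ is the 2E-RCUSUM, which is handled directly by the above ratio argument together with the single-experiment on-off result of \cite{hou2025}. For the inductive step, the truncated $(m-1)$E-RCUSUM is treated as a block whose internal occupation fractions can be set arbitrarily by the induction hypothesis, while its total expected length scales monotonically with $a_m$ and $N_{m-1}$.

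The main obstacle will be monotonicity and continuity of the cycle expectations in the parameters $a_i$, $N_j$, $\mu$. The truncation levels $N_j$ are integers, so the achievable POR values are discrete. However, the statement only requires the inequality $\mathsf{POR}_i\le\beta_i$, so it suffices to show that each internal occupation fraction can be made small and the idle fraction large, and discreteness does not matter. Continuity would be needed only for exact equality, which would call for randomization as in \cite{lubenia2026}. For the required monotonicity I would use a coupling argument. Increasing $a_i$ or $N_{i-1}$ pathwise lengthens the corresponding lower-level excursion, while leaving unchanged the number of uses of the higher-level experiment per cycle. Independence of all choices from $A$ follows because every quantity above is computed for the untruncated-by-$A$ regenerative process.
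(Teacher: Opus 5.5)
\paragraph{Comparison with the paper.} You re-derive the POR result from scratch. The paper's proof is a pure reduction. $\mathsf{POR}_i$ is computed under $\mathsf{P}_\infty$, where the $m$E-RCUSUM statistic depends only on the $f_i$ and the fixed functions $\bar\ell_i=\log(\bar g_i/f_i)$. So the unknown families $\mathcal{G}_i$ play no role, and $\Psi_{m\textup{RC}}$ is exactly the DE$m$E-CUSUM of \cite{lubenia2026} run with $g_i:=\bar g_i$. Theorem~5(2) there, which already supplies parameters not depending on $A$, then applies verbatim. A self-contained renewal--reward proof is a legitimate alternative. Your mechanism is the right one: regeneration at resets to $0$, below-zero sojourns bounded through the $N_j$, and idle time made dominant by small $\mu$ and large $N_0$.

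\paragraph{The gap: passing from $A=\infty$ to a fixed finite $A$.} This is exactly what the statement quantifies over. You claim that for finite $A$ the ratio bounds hold ``up to an $o(1)$ correction''. You also claim that $A$-independence follows because everything was computed for the $A=\infty$ process. Neither gives $\mathsf{POR}_i(\Psi_{m\textup{RC}})\le\beta_i$ at a fixed, possibly small, $A$ with parameters chosen once. A correction that vanishes as $A\to\infty$ says nothing at fixed $A$. Nor is the $A=\infty$ ratio automatically an upper bound. Take any convention that continues the average past $\tau$, e.g.\ the restart convention you mention. A cycle that reaches $A$ has a shorter experiment-$m$ phase ($\lambda_A\le\sigma$), but it also loses its entire below-zero sojourn. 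These two effects push $\mathsf{POR}_m$ in opposite directions.

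\paragraph{How to repair it.} You need an $A$-uniform bound, and your ingredients already give one once you drop the asymptotics.
\begin{itemize}
\item \emph{Non-idle time.} Per regeneration cycle, the number of non-idle observations has mean at most $\mathsf{E}_\infty[\sigma]+\sum_{r=1}^{m-1}\prod_{k=m-r}^{m-1}N_k$. This uses $\lambda_A\le\sigma$ pathwise plus the truncation counts, and it involves neither $A$ nor $N_0,\mu$.
\item \emph{Idle time.} Consider the event that the first experiment-$m$ observation of a cycle has $\bar\ell_m<0$. This event, and the below-zero sojourn that follows it, are the same for every $A>0$. On it, the statistic reaches an idle phase with probability at least $\prod_{i=1}^{m}\mathsf{P}_\infty(\bar\ell_i(X_1^i)<0)>0$. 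With, say, all $a_i=1$, that idle phase has length $\min\{N_0,\lceil\delta/\mu\rceil\}$ for a gap $\delta>0$ a.s.
\item \emph{Conclusion.} Fix the $a_i$ and $N_1,\dots,N_{m-1}$, and let $N_0\to\infty$ and $\mu\to0$. The mean cycle length then becomes arbitrarily large uniformly in $A$, so every $\mathsf{POR}_i$, $i=1,\dots,m$, falls below $\beta_i$ simultaneously.
\end{itemize}

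\paragraph{The induction.} The direct argument makes your induction on $m$ unnecessary. As stated, the induction does not go through. The induction hypothesis concerns long-run PORs of an untruncated $(m-1)$E-RCUSUM started at $0$. It says nothing about occupation fractions inside a single truncated excursion that starts at the random level $a_mU_m$ and is killed at $0$ or after $N_{m-1}$ uses of experiment $m-1$. The monotonicity and coupling discussion is likewise not needed for an inequality-only statement.
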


\begin{IEEEproof}
The statistic update defining the $m$E-RCUSUM policy $\Psi_{m\textup{RC}}$ is similar to that of the DE$m$E-CUSUM policy $\Psi_{\text{DE}m\text{E}}$ (Algorithm~9 in \cite{lubenia2026}) where the former uses the pre-change density $f_i$ and post-change density $\bar{g}_i$, while the latter uses $f_i$ and $g_i$, respectively, for each $i = 1, \dots, m$. Moreover, the structure of $\Psi_{m\textup{RC}}$ and $\Psi_{\text{DE}m\text{E}}$ is identical, except for the log-likelihood ratio used. Thus, the claim for $\Psi_{m\textup{RC}}$ follows immediately from Theorem~5(2) in \cite{lubenia2026} for $\Psi_{\text{DE}m\text{E}}$.
\end{IEEEproof}

\medskip

Theorem~\ref{thm:POR} shows that the POR constraints are satisfied by controlling the scaling factors and truncation parameters, independent of the threshold $A$. We note that when a truncation parameter is not integer-valued, then that truncation level is achieved by randomization at the nearest integer levels. 

\medskip
We now provide an upper bound for the delay of the $m$E-RCUSUM algorithm.

\begin{theorem}
\label{thm:upper-bound}

Suppose Assumptions 1 and 2 hold. Then for any fixed choices of $a_i > 0$ ($i = 1, \dots, m$), $N_j > 0$ \linebreak ($j = 0, \dots, m - 1$), and $\mu > 0$, as $A \to \infty$,
$$ \sup_{G \in \mathcal{G}} \ \mathsf{WADD}_G (\Psi_{m\textup{RC}}) \leq \frac{A}{D(\bar{g}_m \ \Vert \ f_m)}(1 + o(1)). $$
\end{theorem}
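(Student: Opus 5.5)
The plan is to reduce the robust upper bound to the known-model upper bound of \cite{lubenia2026} at the least favorable environment $\bar{G}=(\bar g_1,\dots,\bar g_m)$, and then show that $\bar{G}$ is indeed the worst case. The bound then comes from a stochastic-ordering (monotonicity) argument built on Definition~\ref{def:stoch-bdd-law}.

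\textbf{Step 1 (the LFL environment).} Under $\bar{G}$, the statistic of $\Psi_{m\textup{RC}}$ uses the true log-likelihood ratios $\bar\ell_i=\log(\bar g_i/f_i)$. So $\Psi_{m\textup{RC}}$ is exactly the DE$m$E-CUSUM policy of \cite{lubenia2026} for the model $(f_i,\bar g_i)$. The delay upper bound proved there (the analogue of its Theorem on WADD, valid for any fixed $a_i$, $N_j$, $\mu$) gives
$$\mathsf{WADD}_{\bar G}(\Psi_{m\textup{RC}})\le \frac{A}{D(\bar g_m\,\Vert\,f_m)}(1+o(1)).$$
This step takes Assumption~2 as the moment condition needed there.

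\textbf{Step 2 (worst case over $\mathcal{G}$).} Fix $G\in\mathcal{G}$ and $\nu$, and condition on $I_{\nu-1}$. This fixes the state of $D_{\nu-1}$ and the current phase of the algorithm. Taking $\phi=\bar\ell_i$ in Definition~\ref{def:stoch-bdd-law}, every post-change increment $\bar\ell_i(X_n^i)$ under $g_i$ stochastically dominates the same increment under $\bar g_i$.

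I would then construct a coupling on one probability space. For each experiment $i$ and each time $n\ge\nu$, let $\bar\ell_i(X^i_n)$ under $G$ be at least its counterpart under $\bar G$, via quantile coupling with independent uniforms. I would then show pathwise monotonicity: if every increment is larger, the whole trajectory $D_n$ is larger, and the stopping time is smaller. Conditioning is fine because the pre-change path is the same under $G$ and $\bar G$, and the essential supremum is taken conditional on the past. Once this holds, $\mathsf{E}_\nu^G[(\tau-\nu+1)^+\mid I_{\nu-1}]\le \mathsf{E}_\nu^{\bar G}[(\tau-\nu+1)^+\mid I_{\nu-1}]$. Taking the supremum over $\nu$, the essential supremum, and $G$ then gives $\sup_G\mathsf{WADD}_G\le\mathsf{WADD}_{\bar G}$, and Step 1 finishes the proof.

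\textbf{Main obstacle.} The hard part is the pathwise monotonicity in Step 2. The $m$E-RCUSUM map is not obviously monotone: undershoots are scaled by $a_i$, resets to $0$ or to $a_{i}U_i$ depend on the path, and the truncation counters $N_j$ and the experiment choice depend on the trajectory. A larger path could switch experiments at different times, and the coupling must be indexed by the count of uses of each experiment rather than by calendar time.

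I would first avoid needing full monotonicity. Instead I would bound the detection delay by the time spent above zero using experiment $m$ plus the excursions below zero. Above zero, increments dominate those of $\bar G$, so the number of experiment-$m$ samples needed to cross $A$ is dominated by the $\bar G$ case, which is about $A/D(\bar g_m\Vert f_m)$. Each excursion below zero is bounded in expectation uniformly over $G$: the truncations $N_0,\dots,N_{m-1}$ make the length of each sojourn bounded by a constant. The number of excursions during a run from $0$ to $A$ under drift $\ge D(\bar g_m\Vert f_m)>0$ is $O(1)$ in expectation, uniformly in $G$ by dominance, via a ladder-height and Wald argument. Together these give the claimed $A/D(\bar g_m\Vert f_m)(1+o(1))$ bound directly and uniformly over $G$.
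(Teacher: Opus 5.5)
Your fallback in the last paragraph is essentially the paper's proof, and it is correct. Truncation caps each sojourn below zero by a constant ($K_{m-1}$). The experiment-$m$ samples concatenate into an RCUSUM whose worst case over $\mathcal{G}_m$ is $\bar G_m$ by the dominance argument of \cite{unnikrishnan2011}. The number of cycles is geometric, with success probability bounded below uniformly in $G$ by $q=\mathsf{P}_1^{\bar G_m}(\text{the walk never goes below } 0)>0$, which the paper uses where you sketch a ladder-height/Wald argument. This gives $\mathsf{E}_1^{\bar G_m}[\tau_{\scriptscriptstyle \text{RC}}^m]+K_{m-1}/q+K_{m-1}$. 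Step~1 and the global pathwise monotonicity of Step~2 are therefore unnecessary, and you were right not to rely on the latter, since the path-dependent undershoot scalings and truncation resets leave it unproven.
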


\begin{IEEEproof}
Since the $m$E-RCUSUM algorithm has the same structure as the DE$m$E-CUSUM algorithm, by Theorem~5(3) in \cite{lubenia2026}, when the statistic $D_n$ of the $m$E-RCUSUM algorithm falls below 0, due to the truncation parameters, the truncated $(m - 1)$E-RCUSUM algorithm resets $D_n$ to 0 after using at most
$$ K_{m - 1} = \sum_{r = 0}^{m - 1} \prod_{k = (m - 1) - r}^{m - 1} N_k $$
observations. This bound is independent of the threshold $A$, the time the statistic falls below 0, and the environment $G$.

We now obtain a bound on
$$ \mathsf{WADD}_G (\Psi_{\scriptscriptstyle m\text{RC}}) = \sup_{\nu \geq 1} \esssup \mathsf{E}_\nu^G [(\tau_{\scriptscriptstyle m\text{RC}} - \nu + 1)^+ \mid I_{\nu - 1}] $$
for any fixed $G \in \mathcal{G}$.

We note that for a fixed $\nu$, the conditional expectation depends on $I_{\nu - 1}$ through the $m$E-RCUSUM statistic $D_{\nu - 1}$ only. As a result we have
\begin{equation}
\mathsf{E}_\nu^G [(\tau_{\scriptscriptstyle m\text{RC}} - \nu + 1)^+ \mid I_{\nu - 1}] \leq \mathsf{E}_1^G [\tau_{\scriptscriptstyle m\text{RC}}] + K_{m - 1}. \label{eq:ave-delay-bound}
\end{equation}
This is because the worst possible choice of the change point $\nu$ and the worst-possible information vector $I_{\nu-1}$ will make sure that when the change occurs at a certain time $\nu$, the algorithm is in the data-efficiency mode, i.e., performing no experiments. From this state, the time taken by the statistic $D_n$ to reach $0$ from below is bounded by $K_{m-1}$. Once the statistic reached $0$ from below, it is reset to $0$. From this point onward, the average detection delay will be $\mathsf{E}_1^G [\tau_{\scriptscriptstyle m\text{RC}}]$. 
Since the right-hand side of \eqref{eq:ave-delay-bound} does not depend on $I_{\nu - 1}$ and $\nu$, taking the supremum over $\nu$ of the essential supremum of both sides gives us
\begin{equation}
\mathsf{WADD}_G (\Psi_{m\text{RC}}) \leq \mathsf{E}_1^G [\tau_{\scriptscriptstyle m\text{RC}}] + K_{m - 1}. \label{eq:wadd-mrc-bound}
\end{equation}

We now determine a bound for $\mathsf{E}_1^G [\tau_{\scriptscriptstyle m\text{RC}}]$.

Let $\tau_{\scriptscriptstyle \text{RC}}^m$ be the stopping time of the RCUSUM algorithm in \cite{unnikrishnan2011} where experiment $m$ is used. Let the statistic of this algorithm be
\begin{align*}
& W_{n + 1} = \left(W_n + \log \frac{\bar{g}_m (X_{n + 1}^m)}{f_m (X_{n + 1}^m)}\right)^+,  \; \; \text{with} \ W_0 = 0, \\
& \tau_{\scriptscriptstyle \text{RC}}^m = \inf \{ n \geq 1: W_n \geq A \}.
\end{align*}
The evolution of the $m$E-RCUSUM statistic $D_n$ and the RCUSUM statistic $W_n$ are statistically identical, except for the sojourns of $D_n$ below zero.  To clarify this, we now define 
$$ \lambda_A = \inf \{ n \geq 1: D_n \notin [0, A) \} = \inf \{ n \geq 1: W_n \notin [0, A) \}, $$
the first time the $m$E-RCUSUM statistic $D_n$ (or the RCUSUM statistic $W_n$) either falls below 0 or crosses $A$ from below. Thus, if change occurs at time $\nu=1$, the starting at time $1$, the algorithms $\tau_{\scriptscriptstyle \text{RC}}^m $ and $\Psi_{m\text{RC}}$ are identical until either the statistics cross $A$ or go below zero. If $D_n$ goes below zero, as explained at the start of the proof, the number of observations made is bounded by $K_{m - 1}$.  In the algorithm $\tau_{\scriptscriptstyle \text{RC}}^m$, when the statistic $D_n$ crosses zero from below, the statistic is reset to zero. This starts a new renewal cycle. Consequently, the evolution of the statistic $D_n$ is statistically identical to that of the statistic $W_n$ starting at time $\lambda_A$ (conditioned on $W_{\lambda_A} < 0$). 

The number of such renewal cycles is a geometrically distributed random variable with probability $\mathsf{P}_1^G (D_{\lambda_A} \geq A)$ of successfully crossing $A$ from below. Thus, the average number of cycles is the mean of the said geometric random variable: 
$$
\frac{1}{\mathsf{P}_1^G (D_{\lambda_A} \geq A)}. 
$$
Applying the reasoning above to \eqref{eq:wadd-mrc-bound} gives us
\begin{align*}
\mathsf{WADD}_G (\Psi_{m\text{RC}})
& \leq \mathsf{E}_1^G [\tau_{\scriptscriptstyle m\text{RC}}] + K_{m - 1} \\
& \leq \mathsf{E}_1^{G_m} [\tau_{\scriptscriptstyle \text{RC}}^m] + \frac{K_{m - 1}}{\mathsf{P}_1^G (D_{\lambda_A} \geq A)} + K_{m - 1}.
\end{align*}
Here the term $ \frac{K_{m - 1}}{\mathsf{P}_1^G (D_{\lambda_A} \geq A)}$ appears as a bound on the time spent by the statistic $D_n$ below the level zero. 
Taking the supremum over $G \in \mathcal{G}$ of both sides, we get
\begin{align}
\sup_{G \in \mathcal{G}} \mathsf{WADD}_G (\Psi_{m\text{RC}}) \notag 
&\leq \sup_{G \in \mathcal{G}} \mathsf{E}_1^{G_m} [\tau_{\scriptscriptstyle \text{RC}}^m] + \frac{K_{m - 1}}{\ds \inf_{G \in \mathcal{G}} \mathsf{P}_1^G (D_{\lambda_A} \geq A)} + K_{m - 1}\\
 &= \sup_{G_m \in \mathcal{G}_m} \mathsf{E}_1^{G_m} [\tau_{\scriptscriptstyle \text{RC}}^m] + \frac{K_{m - 1}}{\ds \inf_{G \in \mathcal{G}} \mathsf{P}_1^G (D_{\lambda_A} \geq A)} + K_{m - 1}. \label{eq:sup-wadd}
\end{align}
Note that the supremum in the first term on the right-hand side of the quality is over the post-change law $G_m~\in~\mathcal{G}_m$ for experiment $m$ since the expectation is of the stopping time $\tau_{\scriptscriptstyle \text{RC}}^m$ of the RCUSUM algorithm. 

Now, due to Assumptions~1 and 2, and the proof of Theorem III.2 in \cite{unnikrishnan2011}, we have
\begin{equation}
\sup_{G_m \in \mathcal{G}_m} \ \mathsf{E}_1^{G_m} [\tau_{\scriptscriptstyle \text{RC}}^m] = \mathsf{E}_1^{\bar{G}_m} [\tau_{\scriptscriptstyle \text{RC}}^m]. \label{eq:sup-g}
\end{equation}
Also, by the same set of assumptions, and following the arguments in the proof of Theorem IV.4 in \cite{hou2025}, we have
\begin{align}
& \mathsf{P}_1^G (D_{\lambda_A} \geq A) \notag \\
& \quad \geq \mathsf{P}_1^G (D_n \geq 0 \ \text{for all} \ n) \notag \\
& \quad = \lim_{N \to \infty} \ \mathsf{P}_1^G (D_n \geq 0 \ \text{for all} \ n \leq N) \notag \\
& \quad = \lim_{N \to \infty} \ \mathsf{P}_1^{G_m} \left( \max_{1 \leq n \leq N} \max_{1 \leq k \leq n} \sum_{i = k}^n \bar{\ell}_m (X_i^m) \geq 0 \right) \notag \\
& \quad \geq \lim_{N \to \infty} \ \mathsf{P}_1^{\bar{G}_m} \left( \max_{1 \leq n \leq N} \max_{1 \leq k \leq n} \sum_{i = k}^n \bar{\ell}_m (X_i^m) \geq 0 \right) \notag \\
& \quad = \mathsf{P}_1^{\bar{G}_m} \left( \max_{1 \leq k \leq n} \sum_{i = k}^n \bar{\ell}_m (X_i^m) \geq 0 \ \text{for all} \ n \right). \label{eq:p-geq-q}
\end{align}
The probability on the extreme right, 
$$ q := \mathsf{P}_1^{\bar{G}_m} \left( \max_{1 \leq k \leq n} \sum_{i = k}^n \bar{\ell}_m (X_i^m) \geq 0 \ \text{for all} \ n \right) > 0. $$
This follows from the results of renewal theory \cite{woodroofe1982}. We note that the probability $q$ depends on $\bar{G}_m$ and not on any other $G \in \mathcal{G}$. Also, $q$ does not depend on threshold $A$.

Taking the infimum over $G \in \mathcal{G}$ of both sides of \eqref{eq:p-geq-q}, we get
\begin{equation}
\inf_{G \in \mathcal{G}} \ \mathsf{P}_1^G (D_{\lambda_A} \geq A) \geq q > 0. \label{eq:inf-p}
\end{equation}

Substituting \eqref{eq:sup-g} and \eqref{eq:inf-p} into \eqref{eq:sup-wadd}, we get,
\begin{align*}
 \sup_{G \in \mathcal{G}} \ \mathsf{WADD}_G (\Psi_{m\text{RC}}) & \quad \leq \sup_{G_m \in \mathcal{G}_m} \ \mathsf{E}_1^{G_m} [\tau_{\scriptscriptstyle \text{RC}}^m] + \frac{K_{m - 1}}{\ds \inf_{G \in \mathcal{G}} \ \mathsf{P}_1^G (D_{\lambda_A} \geq A)} + K_{m - 1} \\
& \quad \leq \mathsf{E}_1^{\bar{G}_m} [\tau_{\scriptscriptstyle \text{RC}}^m] + \frac{K_{m - 1}}{q} + K_{m - 1} \\
& \quad \leq \frac{A}{D(\bar{g}_m \ \Vert \ f_m)}(1 + o(1)) + \frac{K_{m - 1}}{q} + K_{m - 1}, \quad \textup{ as $A \to \infty$}, \\
& \quad = \frac{A}{D(\bar{g}_m \ \Vert \ f_m)}(1 + o(1)), \quad \textup{ as $A \to \infty$}.
\end{align*}
Here, the last inequality follows from the analysis of the CUSUM algorithm in \cite{lorden1971}, and the last equality follows since $\frac{K_{m - 1}}{q} + K_{m - 1}$ is a constant which does not depend on $A$.
\end{IEEEproof}

\medskip

Theorem~\ref{thm:upper-bound} shows that the delay of the $m$E-RCUSUM algorithm is within a constant of the delay of the CUSUM algorithm for the highest-quality experiment $m$.

Finally, based on the results above, we now state and prove the result on the asymptotic robust optimality of the $m$E-RCUSUM algorithm.

\begin{theorem}
\label{thm:robust-optimality}

If Assumptions~1 and 2 hold, then the \textup{$m$E-RCUSUM} algorithm is asymptotically robust optimal for each fixed $\beta_i$ ($i = 1, \dots, m$) with $\ds \sum_{i = 1}^m \beta_i < 1$, as $\gamma \to \infty$.
\end{theorem}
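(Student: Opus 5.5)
The plan is to sandwich the minimax delay of $\Psi_{m\text{RC}}$ between the universal lower bound of Theorem~\ref{thm:lower-bound} and the achievability bound of Theorem~\ref{thm:upper-bound}, after checking that the algorithm is feasible for \eqref{eq:prob-formulation}. We say a feasible family of policies is asymptotically robust optimal if its worst-case delay $\sup_{G\in\mathcal{G}}\mathsf{WADD}_G$ matches, up to a factor $1+o(1)$ as $\gamma\to\infty$, the infimum of this quantity over all policies satisfying $\mathsf{ARLFA}\ge\gamma$ and $\mathsf{POR}_i\le\beta_i$.

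\textbf{Feasibility.} Fix the $\beta_i$ with $\sum_i\beta_i<1$.
\begin{enumerate}
\item Invoke Theorem~\ref{thm:POR} to obtain parameters $a_1,\dots,a_m$, $N_0,\dots,N_{m-1}$ and $\mu$ such that $\mathsf{POR}_i(\Psi_{m\text{RC}})\le\beta_i$ for every $i$. The important point is that these parameters do not depend on the threshold $A$, so we can freeze them once and for all.
\item With the parameters frozen, set $A=\log\gamma$. Theorem~\ref{thm:ALRFA} holds for any fixed $a_i,N_j,\mu$, so it gives $\mathsf{ARLFA}(\Psi_{m\text{RC}})\ge\gamma$.
\end{enumerate}
Hence, for every $\gamma$, the policy $\Psi_{m\text{RC}}(\log\gamma,a_1,\dots,a_m,N_0,\dots,N_{m-1},\mu)$ is feasible for \eqref{eq:prob-formulation}.

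\textbf{Upper bound.} As $\gamma\to\infty$ we have $A=\log\gamma\to\infty$ with all other parameters fixed, which is exactly the regime of Theorem~\ref{thm:upper-bound}. It yields
\[
\sup_{G\in\mathcal{G}}\mathsf{WADD}_G(\Psi_{m\text{RC}})\le\frac{\log\gamma}{D(\bar g_m\,\Vert\,f_m)}(1+o(1)).
\]

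\textbf{Lower bound and conclusion.} Since $\Psi_{m\text{RC}}$ belongs to the feasible class,
\[
\inf_{\Psi}\sup_{G\in\mathcal{G}}\mathsf{WADD}_G(\Psi)\le\sup_{G\in\mathcal{G}}\mathsf{WADD}_G(\Psi_{m\text{RC}}),
\]
where the infimum is over feasible policies. In the other direction, Theorem~\ref{thm:lower-bound} shows that every feasible $\Psi$ has worst-case delay at least $\frac{\log\gamma}{D(\bar g_m\Vert f_m)}(1+o(1))$. To pass the infimum through, we need the $o(1)$ term to be uniform over feasible policies. This holds because the proof of Theorem~\ref{thm:lower-bound} actually bounds $\inf_\Psi\mathsf{WADD}_{\bar G}(\Psi)$, which is a single sequence in $\gamma$ via Corollary~1 of \cite{lubenia2026}. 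Combining the two bounds gives
\[
\inf_{\Psi}\sup_{G}\mathsf{WADD}_G(\Psi)\sim\sup_{G}\mathsf{WADD}_G(\Psi_{m\text{RC}})\sim\frac{\log\gamma}{D(\bar g_m\,\Vert\,f_m)},
\]
which is the claimed asymptotic robust optimality.

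\textbf{Main obstacle.} The step needing the most care is the compatibility of the parameter choices. The POR parameters must be fixed before $A$ is chosen, so that Theorem~\ref{thm:upper-bound}, stated for fixed $a_i,N_j,\mu$, applies along the whole sequence $A=\log\gamma$. When Theorem~\ref{thm:POR} calls for non-integer truncation levels, they are implemented by randomization; I would note that the bound $K_{m-1}$ from the proof of Theorem~\ref{thm:upper-bound} still holds with each $N_j$ replaced by $\lceil N_j\rceil$, so the upper bound is unaffected.
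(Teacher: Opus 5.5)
Your proposal is correct and follows the paper's proof essentially step for step: fix the $A$-independent parameters from Theorem~\ref{thm:POR}, set $A=\log\gamma$ via Theorem~\ref{thm:ALRFA}, take the upper bound from Theorem~\ref{thm:upper-bound}, and match it with the lower bound of Theorem~\ref{thm:lower-bound}. Your two extra remarks are valid refinements that the paper leaves implicit: the $o(1)$ in the lower bound is uniform over feasible policies because it comes from the single sequence $\inf_\Psi \mathsf{WADD}_{\bar G}(\Psi)$, and randomized truncation only replaces $N_j$ by $\lceil N_j\rceil$ in $K_{m-1}$.
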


\begin{IEEEproof}
By Theorem~\ref{thm:ALRFA}, setting $A = \log \gamma$ guarantees that $\mathsf{ARLFA}(\Psi_{m\text{RC}}) \geq \gamma$. By Theorem~\ref{thm:POR}, we can set the values of $a_i > 0$ ($i = 1, \dots, m$), $N_j > 0$ ($j = 0, \dots, m - 1$), and $\mu > 0$, all independent of $A$, to satisfy $\mathsf{POR}_i (\Psi_{m\textup{RC}}) \leq \beta_i$ for each $i = 1, \dots, m$. With these parameter values, Theorem~\ref{thm:upper-bound} gives us, as $A \to \infty$ (equivalently, $\gamma \to \infty$),
\begin{align*}
\sup_{G \in \mathcal{G}} \ \mathsf{WADD}_G (\Psi_{m\textup{RC}})
& \leq \frac{A}{D(\bar{g}_m \ \Vert \ f_m)}(1 + o(1)) \\
& = \frac{\log \gamma}{D(\bar{g}_m \ \Vert \ f_m)}(1 + o(1)).
\end{align*}
Now, by Theorem~\ref{thm:lower-bound}, as $\gamma \to \infty$,
$$ \sup_{G \in \mathcal{G}} \ \mathsf{WADD}_G (\Psi_{m\textup{RC}}) \geq \frac{\log \gamma}{D(\bar{g}_m \ \Vert \ f_m)}(1 + o(1)). $$
Therefore, as $\gamma \to \infty$,
$$ \sup_{G \in \mathcal{G}} \ \mathsf{WADD}_G (\Psi_{m\textup{RC}}) = \frac{\log \gamma}{D(\bar{g}_m \ \Vert \ f_m)}(1 + o(1)), $$
i.e., the $m$E-RCUSUM algorithm is first-order asymptotically optimal.
\end{IEEEproof}

% ---------- %
% Conclusion %
% ---------- %

\section{Conclusion}
\label{sec:conclusion}

We developed the multiple-experiment robust cumulative sum ($m$E-RCUSUM) algorithm which utilizes multiple experiments to quickly detect a change in a stochastic process where the post-change distribution is unknown but the decision-maker believes the distribution belongs to a family of distributions. The algorithm uses the least favorable law (LFL) for the post-change model of each experiment to detect the change and design the experiments. 
We highlighted four facts regarding the $m$E-RCUSUM algorithm. First, the average run length to a false alarm (ARLFA) constraint is satisfied by setting a threshold $A$ similar to the one used in the CUSUM algorithm. Second, the pre-change observation ratio (POR) constraints are satisfied by controlling the scaling factors and truncation parameters, independent of the threshold $A$. Third, the delay of the $m$E-RCUSUM algorithm is within a constant of the delay of the CUSUM algorithm for experiment $m$. Finally, the algorithm is first-order asymptotically robust optimal.

The main insight provided by this work is as follows. Suppose a change detector has multiple experiments of differing quality to choose from and the exact post-change law is unknown. Consequently, the true order of the quality of experiments is also unknown. 
In this case,  to be robust optimal, the change detector must order the experiments according to the quality provided by the LFLs. The best delay performance achievable in this setup, for any given set of constraints on the ARLFA and POR values, is governed by the experiment with the highest LFL quality.

% if have a single appendix:
%\appendix[Proof of the Zonklar Equations]
% or
%\appendix  % for no appendix heading
% do not use \section anymore after \appendix, only \section*
% is possibly needed

% use appendices with more than one appendix
% then use \section to start each appendix
% you must declare a \section before using any
% \subsection or using \label (\appendices by itself
% starts a section numbered zero.)
%

% -------- %
% Appendix %
% -------- %

% \appendices
% \section{Proof of the First Zonklar Equation}
% Appendix one text goes here.

% you can choose not to have a title for an appendix
% if you want by leaving the argument blank
% \section{}
% Appendix two text goes here.

% -------------- %
% Acknowledgment %
% -------------- %

% use section* for acknowledgment
\section*{Acknowledgment}

This work was supported in part by the U.S. National Science Foundation under Grant 2427909.

% Can use something like this to put references on a page
% by themselves when using endfloat and the captionsoff option.
\ifCLASSOPTIONcaptionsoff
  \newpage
\fi

% ---------- %
% References %
% ---------- %

% trigger a \newpage just before the given reference
% number - used to balance the columns on the last page
% adjust value as needed - may need to be readjusted if
% the document is modified later
%\IEEEtriggeratref{8}
% The "triggered" command can be changed if desired:
%\IEEEtriggercmd{\enlargethispage{-5in}}

% references section

% can use a bibliography generated by BibTeX as a .bbl file
% BibTeX documentation can be easily obtained at:
% http://mirror.ctan.org/biblio/bibtex/contrib/doc/
% The IEEEtran BibTeX style support page is at:
% http://www.michaelshell.org/tex/ieeetran/bibtex/
%\bibliographystyle{IEEEtran}
% argument is your BibTeX string definitions and bibliography database(s)
%\bibliography{IEEEabrv,../bib/paper}
%
% <OR> manually copy in the resultant .bbl file
% set second argument of \begin to the number of references
% (used to reserve space for the reference number labels box)
% \begin{thebibliography}{1}

% \bibitem{IEEEhowto:kopka}
% H.~Kopka and P.~W. Daly, \emph{A Guide to \LaTeX}, 3rd~ed.\hskip 1em plus
%   0.5em minus 0.4em\relax Harlow, England: Addison-Wesley, 1999.

% \end{thebibliography}

\bibliographystyle{IEEEtran}
\bibliography{references}

\end{document}